\DeclareMathAlphabet\mathbfcal{OMS}{cmsy}{b}{n}
\DeclareMathOperator*{\argmin}{arg\,min}
\renewcommand\footnotetextcopyrightpermission[1]{} 
  \providecommand\BibTeX{{%
    \normalfont B\kern-0.5em{\scshape i\kern-0.25em b}\kern-0.8em\TeX}}}
\begin{document}
\title{CARAMEL: A Succinct Read-Only Lookup Table via Compressed Static Functions}

\author{Benjamin Coleman}
\affiliation{%
  \institution{Google DeepMind}
  \city{Mountain View}
  \country{USA}
}
\email{colemanben@google.com}

\author{David Torres Ramos}
\affiliation{%
  \institution{Distill}
  \city{New York}
  \country{USA}
}
\email{detorresramos1@gmail.com}

\author{Vihan Lakshman}
\affiliation{%
  \institution{MIT CSAIL}
  \city{Cambridge}
  \country{USA}
}
\email{vihan@mit.edu}

\author{Chen Luo}
\affiliation{%
  \institution{Amazon}
  \city{Palo Alto}
  \country{USA}
}
\email{cheluo@amazon.com}

\author{Anshumali Shrivastava}
\affiliation{%
  \institution{Rice University}
  \city{Houston}
  \country{USA}
}
\email{anshumali@rice.edu}

\begin{abstract}
Lookup tables are a fundamental structure in many data processing and systems applications. Examples include tokenized text in NLP, quantized embedding collections in recommendation systems, integer sketches for streaming data, and hash-based string representations in genomics. With the increasing size of web-scale data, such applications often require compression techniques that support fast random $O(1)$ lookup of individual parameters directly on the compressed data (i.e. without blockwise decompression in RAM). While the community has proposd a number of succinct data structures that support queries over compressed representations, these approaches do not fully leverage the low-entropy structure prevalent in real-world workloads to reduce space. Inspired by recent advances in static function construction techniques, we propose a space-efficient representation of immutable key-value data, called CARAMEL, specifically designed for the case where the values are multi-sets. By carefully combining multiple compressed static functions, CARAMEL occupies space proportional to the data entropy with low memory overheads and minimal lookup costs. We demonstrate 1.25-16x compression on practical lookup tasks drawn from real-world systems, improving upon established techniques, including a production-grade read-only database widely used for development within Amazon.com. 
\end{abstract}




\maketitle
\pagestyle{plain} 

\section{Introduction}

Efficient data structures for storing key-value pairs are indispensable components for countless modern systems, spanning a diverse array of applications including recommender systems, search engines, genomics, and natural language processing. With the ballooning volume of web-scale data, practitioners maintain an intense interest in building caches with a lower memory footprint and faster query times. Inspired by the observation that many applications involve static data collections that need not support real-time insertions or deletions, which allows for more aggressive compression strategies, we focus on developing a read-only data structure with space close to the information-theoretic lower bound while supporting constant-time lookups directly on the compressed representation. 

In this work, we design such a representation using \textit{compressed static functions} and call our data structure CARAMEL (Compressed Array Representation And More Efficient Lookups). A \emph{static function} is a mapping $f: K \to V$ from a static, known set of \emph{hashable} keys $K$ to \emph{encodable} values $V$ such as strings and integers. A \emph{compressed static function} (CSF) is a static function that can represent $f$ in space proportional to the \textit{entropy} of $V$ and can evaluate $f$ in $O(1)$ time~\cite{dietzfelbinger2008succinct}.
While CSFs were historically a purely theoretical curiosity, recent advances in practical construction techniques allow near-optimal sized CSFs to be developed for millions of parameters~\cite{8416577,genuzio2016fast,genuzio2020fast}. 

However, prior work on utilizing CSFs in practical applications largely focus on storing a single value for a given key \cite{genuzio2016fast, shibuya2022space}. In this paper, we study the problem of how to efficiently build a CSF structure for a multiset of values. This setting arises in a number of real-world application. For example, search and recommendation systems often cache the most popular query-result pairs to reduce the throughput demands on machine learning models and improve the average latency~\cite{Luo2022}. In large-scale production serving systems, this table can contain millions of keys each with hundreds of values and consumes the majority of the server memory budget~\cite{guo2020accelerating,nigam2019semantic}. Furthermore, common NLP and genomics tokenization processes yield massive arrays of integer tokens~\cite{shu2018compressing,chen2018learning,ondov2016mash}. To that end, we propose a novel data structure based on arrays of multiple CSFs to address these data-intensive applications. By carefully designing our data structure to support multiple CSFs with minimal overhead, we find that we can achieve further compression than simply using a single static function and validate our approach across a myriad of real-world static retrieval tasks.  

With this goal in mind, we consider the following problem. Given a static set of pairs consisting of a key $k$ and a list of values $A = \left\{(k_i, [v_{i1}, v_{i2}, \dots, v_{im}])\right\}_{i=1}^n$, we wish to construct a small data structure $S$ that can return an individual value $v_{ij}$ in constant time and hence the entire list in time $O(m)$. Standard file compression methods that employ methods such as run-length encoding, delta coding, specialized codecs and other techniques~\cite{witten1999managing,scholer2002compression,williams1999compressing,thompson2022github}are insufficient for this task because they do not allow queries directly on the compressed representation. Moreover, existing succinct data structures do not fully leverage the underlying structure of the data to achieve lower space. In summary, our ideal data representation should satisfy three requirements: \\
\textbf{1) Lossless Compression:} $S$ must represent $A$ with no error, and $S$ should be much smaller than $A$. Ideally, $S$ is of similar size to that obtained via standard compression methods. \\
\textbf{2) Random Access:} The cost to obtain a value from $S$ should scale $O(1)$ with $n$ and be fast in practice. Ideally, the lookup time should be comparable to the latency of a hash table. \\
\textbf{3) Fast Construction:} The time and computation needed to construct $S$ should not exceed practical limitations. For example, a standard workstation should be able to compress an input with a million keys in a few minutes.

\subsection{Motivating Applications}

\textbf{Machine Learning Prediction Serving:} Compressed array representations can serve as space-efficient databases for serving precomputed predictions from machine learning models. In large-scale online settings, such as in search engines, latency and throughput limitations make it infeasible to process every query through a large, expensive ML model. Therefore, a common optimization is to cache the outputs for frequent queries \cite{Luo2022}. This mapping must have $O(1)$ access time and be space-efficient to handle the data volume of modern systems. Moreover, the reduced space footprint enabled by a compressed representation can be especially beneficial for storing and serving results in edge computing environments. 

\textbf{Tokenization:} Tokenization, the process of breaking down a text into smaller units, is an essential step for nearly all NLP modeling tasks. To feed this tokenized text into an NLP model such as BERT, practitioners typically map each token to a unique integer ID \cite{devlin2018bert, song2020linear, kudo2018sentencepiece}. For very large text corpora, it quickly becomes infeasible to store the entire tokenized dataset in main memory for model training. Consequently, practitioners have adopted more sophicated data pipelines that tokenize text on the fly in parallel to training on a separate batch of data. This approach, though, adds additional engineering complexity to a model training framework and may even become a computational bottleneck when training large but shallow models, such as those commonly found in recommender systems \cite{naumov2019deep}. In addition, this approach hinders the ability to access arbitrary training examples, preventing the use of recently developed algorithms for dynamic negative sampling, importance sampling and other batch selection approaches~\cite{daghaghi2021tale,coleman2019selection,katharopoulos2018not,jiang2019accelerating}. These challenges motivate the need for a compressed representation of the tokenized data, which is simply a matrix of integers and thus a candidate for CSF compression. 

\textbf{Genomics:} Given the massive scale of modern genome datasets, genomics has emerged as a particularly critical domain for compressed representations that allow for efficient lookups. Much like text processing, genome datasets involve representing sequences of nucleotides as strings. Modern genomics archives contain petabytes of data, and the data generation rate has only increased with new developments in sequencing technology~\cite{elworth2020petabytes}. Applications such as genome search~\cite{gupta2021fast}, $k$-mer coverage estimation~\cite{brown2012reference}, metagenomic analysis~\cite{segata2012metagenomic} and several others all require mappings from $k$-mer strings (genomic $n$-grams) to integers. Depending on the application, these integer IDs can refer to sequencing coverage metrics, species identifiers, minhash-based featurizations~\cite{ondov2016mash} and other meta-information. The genomics community has recently studied the application of CSFs prior to our work for the purpose of mapping sequence reads to coverage numbers during assembly~\cite{shibuya2022space}. 

In this work, we apply CARAMEL to compress RefSeq, a collection of reference sequences for genomic DNA, transcripts and proteins~\cite{pruitt2006ncbi}. Compressed representations of large reference sequence databases are incredibly important for the genomics community because they allow biological analyses to be conducted against larger references and with fewer computational resources. It is for this reason that the Mash featurization, a way of representing sequences as integer hash codes, became a highly popular first step in many analysis pipelines. We focus on compressing this featurization even further, which could enable analysis to scale to more problems and computational environments.


\subsection{Contributions}

We observe that several tasks in data-intensive systems -- quantized embedding lookups, precomputed result caches, genomics, and more -- critically rely on fast key-value lookups over a multiset of values. Based on recent advances in static function representations, we propose a compressed data structure for fast lookups while achieving a near-optimal space footprint. Overall, our contributions can be summarized as follows: \\

\textbf{Compressed Memory for Parameter Arrays:} We extend compressed static functions to build a novel data structure consisting of arrays of CSFs, which we call CARAMEL. We show that our structure provides $O(m)$ access to rows, memory proportional to the column-wise entropy of the matrix, and removes the need to explicitly store the keys. We also provide evidence that an array of CSFs provides more potential for compression via implicit parameter groupings. \\

\textbf{Theoretical and Practical Improvements:} We improve the CSF construction process by introducing a greedy heuristic to permute the rows of the input matrix to minimize the column entropies, achieving up to 40\% additional compression for settings where the order of tokens is unimportant.  \\

\textbf{Practical Applications:} To our knowledge, we provide the first demonstration that CSFs can be applied to practical applications in a variety of data systems, such as text tokenization, quantized embedding tables, count sketches, genome datasets, and caching search results. In these settings, we show that CSFs can achieve a significant reduction in memory footprint while maintaining acceptable construction times and lookup latencies (Table~\ref{tab:csf_space}). In fact, we find that CARAMEL can achieve faster query times than a standard hash table (see Figure~\ref{fig:ablations}), which underscores the practical relevance of our proposal. We also conduct a case study comparing CARAMEL to a production grade read-only database used widely within Amazon.com, where we find that our proposed method achieves significantly improved compression with comparable construction times. Finally, we open source the Python library\footnote{https://github.com/brc7/caramel} we developed for conducting experiments which, to the best of our knowledge, is the first publicly available CSF implementation to operate on general hashable types as opposed to only integers. We expect this implementation to be a useful reference for the community, especially in genomics where the lack of a Python module is a serious barrier to the adoption and use of CSF structures~\cite{shibuya2022space}.

\section{Background}
\label{sec:background}
In this section, we introduce the tools needed for our proposed data structure, beginning with compressed static functions. A CSF, first formally introduced by \cite{belazzougui2013compressed}, is a function satisfying the following definition. Constructing such functions is the subject of considerable research in the theory community~\cite{porat2009optimal,dietzfelbinger2008succinct,hreinsson2009storing,belazzougui2013compressed,botelho2013practical}.

\begin{definition}
\label{def:csf_function}
Given a set $S$ of key-value pairs $S = \{(k,v(k))\}$ and a universe $U$ of possible keys, let $K \subseteq U$ denote the set of $N = |S|$ unique keys and $V$ denote the multiset of values $v(k) \in V$. A compressed static function $f_S(k) \to v$ is a function which returns $v(k)$ if $k \in K$ and any value if $k \not \in K$.
\end{definition}

It should be noted that the values $V$ are a \textit{multiset} because the elements are not necessarily unique. For example, we may require that $f(k_1) = f(k_2)$. When combined with the fact that CSFs are \textit{static}, the non-uniqueness of values will permit us to attain good compression by developing an optimal instantaneous code for the values of $V$. To quantify the space needed by the CSF, we introduce the concept of empirical entropy.

\begin{definition} Given a multiset $V$ of $N$ values, let $\mathrm{supp}(V)$ be the set of unique values, or support, of $V$ and $\#(v)$ be the number of occurrences of $v \in V$. The first-order entropy of $V$ is:
$$ H_0(V) = \sum_{v \in \mathrm{supp}(V)} \frac{1}{\#(v)} \log_2 \frac{N}{\#(v)}$$
\end{definition}

Compressed static functions aim to represent $f(k)$ in $O(N H_0(V))$ space \cite{belazzougui2013compressed}. Several different construction methods are possible. The simplest method is to use an $O(N)$-space bijection from $K$ to the integers $\{1, ... N\}$ to index into an array of encoded values (e.g. using a minimal perfect hash function and Huffman coding). However, this technique requires substantial overhead in the form of the bijection and pointers to the array. Recent work demonstrates substantial improvements by \textit{directly computing} the encodings of the values.

\textbf{CSFs via Linear Systems:} The fundamental insight of \cite{8416577} is that a CSF may be constructed via a binary linear system. To illustrate this construction technique, suppose we wish to construct a CSF for the set of key-value pairs $\{(k_1, v_1), (k_2, v_2), ... (k_N, v_N)\}$. We begin by hashing each key using three universal (random) hash functions, resulting in three integers $(h_1(k), h_2(k), h_3(k))$ for each key. We also apply an instantaneous code to obtain a binary encoding $e(v)$ for each value. This process gives rise to a linear system that, when solved, yields a CSF.

For a concrete example, let's write out a portion of the linear system for the keys $k_1, k_2,$ and $k_N$. Suppose that the hash functions output values in the range $[1, 8]$ and that $k_1$ maps to $(5, 7, 8)$, $k_2$ to $(1, 4, 2)$ and $k_N$ to $(3, 6, 8)$. Furthermore, suppose that we have the encodings $\{e(v_1) = 1, e(v_2) = 001, ... e(v_N) = 1\}$. We consider the following linear system on the binary space $\mathbf{Z}/2\mathbf{Z}$, where addition and multiplication are replaced by the OR and XOR operations, respectively.

$$ \begin{bmatrix}
0 & 0 & 0 & 0 & 1 & 0 & 1 & 1\\
1 & 1 & 0 & 1 & 0 & 0 & 0 & 0\\
0 & 1 & 1 & 0 & 1 & 0 & 0 & 0\\
0 & 0 & 1 & 1 & 0 & 1 & 0 & 0\\
 &  &  & \vdots & & & & \\ 
0 & 0 & 1 & 0 & 0 & 1 & 0 & 1\\
\end{bmatrix}
\cdot
\mathbf{g}
=
\begin{bmatrix}
e(v_1)_1 \\
e(v_2)_1 \\ 
e(v_2)_2 \\ 
e(v_2)_3 \\ 
\vdots \\ 
e(v_N)_1
\end{bmatrix}
=
\begin{bmatrix}
1 \\
0 \\ 
0 \\ 
1 \\ 
\vdots \\ 
1 
\end{bmatrix}$$
In the solution vector, we use the notation $e(v_j)_i$ to denote the $i$th bit of the instaneous encoding for the value $v_j$. The positions of the 1's in the matrix are decided by the hash functions of the corresponding key $(h_1(k), h_2(k), h_3(k))$, offset by the position of the corresponding bit of the encoding $e(v(k))$. For example, the first row in the matrix corresponds to $k_1$, as $v_1$ has a 1-bit encoding. The next three rows are responsible for the 3-bit encoding of $v_2$. Thus, there is one linear equation for each bit of each encoded value. The solution $\mathbf{g}$ of this system, if it exists, can serve as a lookup table to compute $f(k)$. To obtain the first bit of $e(f(k))$, we access the bits of $\mathbf{g}$ at the locations identified by the hash values of $k$ and compute the XOR. To obtain the subsequent bits of the encoding, we shift the locations to the right. Because $e$ is an instantaneous code, the process finishes as soon as we read a valid symbol.

The probability that the system is solvable depends on the length of $\mathbf{g}$ and is governed by the satisfiability of $s$-XORSAT instances (where $s$ is the number of hash functions used)~\cite{dubois20023}. The major contribution of \cite{8416577,genuzio2016fast,genuzio2020fast} is to make these large systems practically solvable.

\begin{theorem} \label{thm:csf_size} Given the values in Definition~\ref{def:csf_function}, let $e$ be a prefix-free encoding for $V$ where $v \in V$ is represented by code word $e(v)$ with length $l_e(v)$. Then the static function of~\cite{hreinsson2009storing,8416577} occupies space:
$$ \delta_s \sum_{v \in V} l_e(v) + O(1)$$
where $\delta_s$ is a constant and the $O(1)$ term is the cost to store the codebook for the encoding $e$.
\end{theorem}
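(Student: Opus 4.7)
The plan is to decompose the CSF's memory footprint into two parts -- the bits of the solution vector $\mathbf{g}$, which serves as the lookup table, and the description of the prefix-free code $e$ -- and then argue that the former scales linearly with the total encoded length while the latter is $O(1)$.

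First, I would tally the dimensions of the linear system illustrated in the preceding discussion. Each key $k$ with value $v(k)$ contributes exactly $l_e(v(k))$ rows to the system (one per bit of the codeword $e(v(k))$), so the total number of equations is $m := \sum_{v \in V} l_e(v)$. Every row is a sparse parity constraint over $\mathbf{g}$ involving exactly $s$ positions, determined by the hash values $h_1(k), \dots, h_s(k)$ shifted by the bit offset within the codeword.

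Next, I would invoke the satisfiability threshold for random $s$-XORSAT referenced via \cite{dubois20023}: there is a constant $c_s$ depending only on $s$ such that a random $s$-sparse XOR system on $n$ variables with $m$ equations is solvable with high probability whenever $n \geq c_s \cdot m$. Choosing $|\mathbf{g}| = \delta_s \cdot m$ for any $\delta_s$ strictly above this threshold yields a solvable instance, and $\mathbf{g}$ can be stored verbatim using $|\mathbf{g}|$ bits. This produces the main term $\delta_s \sum_{v \in V} l_e(v)$.

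Finally, the description of $e$ amounts to a codebook over $\mathrm{supp}(V)$; since it depends only on the alphabet (and, in a Huffman-style construction, its symbol frequencies) and not on $N$, it contributes the additive $O(1)$ term. I expect the main subtlety to be in the XORSAT step: the $l_e(v(k))$ rows produced by a single key share hash functions shifted by one position, so rows associated to the same key are not fully independent. Verifying that the combined system still behaves as a uniformly random $s$-XORSAT instance -- so that the threshold constant $c_s$ genuinely applies -- is the crux of the argument and is where I would lean on the technical analysis in \cite{hreinsson2009storing,8416577} rather than reprove it from scratch.
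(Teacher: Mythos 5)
The paper never proves Theorem~\ref{thm:csf_size}: it imports the bound from \cite{hreinsson2009storing,8416577} and only sketches the underlying linear-system construction in Section~\ref{sec:background}, so there is no in-paper proof to compare against. That said, your reconstruction is faithful to that sketch and to how the bound is actually established in the cited works. The equation count $\sum_{v\in V} l_e(v)$ is correct (one parity constraint per codeword bit, summed over the multiset $V$), the solution vector $\mathbf{g}$ of length $\delta_s$ times the number of equations is exactly the object that gets stored, and the codebook accounts for the additive term. You also correctly isolate the one genuine technical obstacle --- the $l_e(v(k))$ rows produced by a single key reuse the same hash positions shifted by the bit offset, so the system is not a literally uniform $s$-XORSAT instance --- and defer it to the cited analyses, which is indeed where the real work lives; for $s=3$ the resulting threshold gives $\delta_3 \approx 1.089$, matching the ``roughly 8.9\%'' overhead quoted in the paper.

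Two points worth tightening. First, solvability at density below the $s$-XORSAT threshold holds only with high probability, not deterministically as your phrasing suggests; the construction must re-draw the hash functions on failure and store the successful seed, and the space bound applies to the structure obtained after this (expected $O(1)$) retry loop. Second, the $O(1)$ codebook term is constant only with respect to $N$; it grows with $|\mathrm{supp}(V)|$ and with the size of the values themselves, a caveat the paper itself flags in its discussion of sources of overhead.
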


\textbf{Sources of Overhead:} When the code lengths from Theorem~\ref{thm:csf_size} exactly match the inverse frequencies of the values, the sum of code lengths becomes exactly $N H_0(V)$, leading to an overhead of $\delta_s$ (roughly 8.9\% for $s=3$). However, this is not the case in practice - we observe a much larger per-key overhead. There are two reasons for this: code length mismatch and codebook storage. While Huffman codes have an expected per-key overhead within 1 bit of the Shannon bound ($H_0$), the encoding only hits the bound with equality when frequencies decay in inverse powers of two. Also, the codebook requires storage of $V$, which can be expensive when $V$ consists of long strings or integers.

\section{Related Work}
\label{sec:relatedwork}
In this section, we introduce two algorithms that can be used for similar purposes as CARAMEL. We compare against both algorithms in the experiments.


\subsection{Minimal Perfect Hash Tables}
The problem of minimal perfect hashing is closely related to CSFs. A minimal perfect hash function is a function $f(k)$ that maps a set of unique keys $K = \{k_1, ... k_N\}$ to a permutation of the consecutive integers $\{1, ... N\}$. Clearly, minimal perfect hash (MPH) functions are special cases of static functions (where all the values are different). Similar construction methods (e.g. hypergraph peeling~\cite{belazzougui2014cache}) can be used to obtain both CSFs and MPH functions, and similar bounds on the size (in terms of the number of bits / key) exist for MPH functions. 

MPH functions provide a straightforward baseline for the compression problems we consider here. We begin by constructing a MPH from the set of keys to a set of memory locations. Then, we place a compressed representation of the data or value at the corresponding location. This approach, described in further detail by~\cite{belazzougui2013compressed}, has larger overhead in practice than a CSF-based representation but can still provide reliable performance. In our experiments we compare against such an implementation, which was written in Java and is used in production by Indeed.

\subsection{Succinct Data Structures}

There is a large body of literature on succinct data structures, which aims to develop compressed implementations of data types that support certain operations in $O(1)$ time. The CSF representation that we use was developed as a succinct approach to the dictionary lookup problem, which is to implement a compressed key-value store with $O(1)$ keyed lookup queries. However, many other types of succinct data structures have been developed, and some of them have been used for practical applications of compression~\cite{agarwal2015succinct,duan2021succinct}. 

One particularly useful class of methods focus on succinct string data structure, which support byte-level acccess on the compressed representation. Techniques such as wavelet trees, Burrows-Wheeler transforms and compressed tries can be used to implement such a data structure, and there are several options available. We compare against the succinct data structure from~\cite{agarwal2015succinct}, which is designed to provide fast random access to byte strings using a variation on the Burrows-Wheeler transform.

\section{Algorithm}

In this section, we describe our approach to compress key-value datasets through (CARAMEL). For clarity of the mathematical presentation, we will describe our algorithm in terms of partitioning an integer matrix $A\in \mathbb{Z}^{N\times m}$ into $m$ different CSFs, one for each dimension. We note that CARAMEL can also operate on general hashable types such as strings. However, we can discuss the theoretical properties of CARAMEL in terms of integer matrices without loss of generality by assuming that the integers are the hashed representations of some initial input (which is also how we implement the data structure in practice). We also consider the set of keys to be the integers $\{1, ... N\}$. In practice, for applications such as embedding lookup, we use the text string corresponding to each row of $A$ as the key but we can again consider integer keys in our analysis without loss of generality.


\begin{figure}[t]
\begin{center}
\centerline{\includegraphics[width=2.2in]{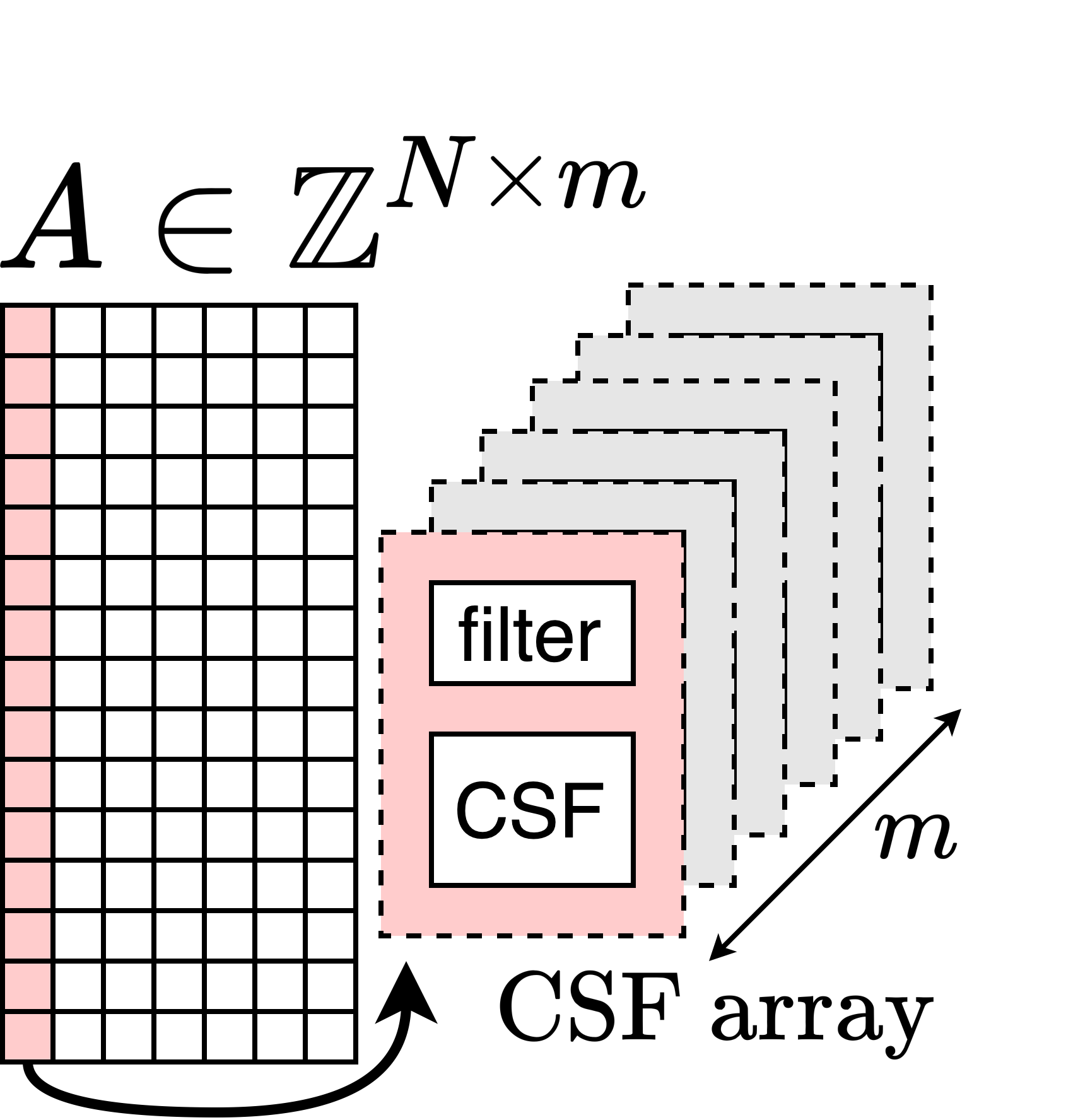}}
\end{center}
\end{figure}

\textbf{Construction and Query:} The construction process consists of the following tasks, which are described by Algorithm~\ref{alg:construction}. First, if the application supports the shuffling of items without changing the representation, we permute the items in $A$ to find the configuration with the lowest column-wise entropy (see Definition~\ref{def:permute_problem}). This improves both the construction time and space of the CSF.

\textbf{Discarding the Keys:} The only practical constraint on the keys is that they be \textit{hashable}. This allows integers, strings and many other types to be used. It should be noted that each CSF does not require the keys to be maintained or stored in any way. This property of the CSF structure allows us to achieve additional memory savings over alternative methods.

\textbf{Why not use a single CSF?} Since we may just as easily treat the entire matrix $A$ as a flat set of integer parameters, it is worth questioning whether a single CSF might result in a more efficient representation. However, there are good reasons for using a 2D representation. For example, there are information-theoretic benefits to independently encoding each dimension when different columns have different vocabulary distributions. For example, suppose we wish to compress a matrix with two columns. Column 1 has the vocabulary $(a, b, c)$ with relative frequencies $(0.5, 0.25, 0.25)$ while column 2 has the vocabulary $(d, e, f)$ with the same frequencies. If we compress the columns independently, $a$ and $d$ are both representable by 1 bit. However, the Huffman code for the merged version uses 2 bits for both $a$ and $d$ (while having the same codebook storage cost). Because CARAMEL is able to implicitly encode the knowledge that different columns have different vocabulary distributions, we can use specialized encodings within each column, enabling more optimizations (e.g. Algorithm ~\ref{alg:permute_greedy}).

\textbf{Exploiting Parallelism:} There are also practical reasons to decompose a large set of integer parameters into several independent CSFs. We find it advantageous to parallelize the construction process, since our larger tasks (e.g. MSMarco~\cite{nguyen2016ms} and the Pile~\cite{gao2020pile}) would need systems with billions of equations to represent the flat version. The CSF array representation is also helpful at query time, where we may query multiple structures in parallel.

\section{Theory}
\label{sec:theory}

In this section, we discuss the space and query time of our CSF array. In particular, we discuss optimizations to reduce the column-wise entropy of the matrix for applications that support the reordering of items along one of the dimensions of $A$.

\textbf{Memory and Lookup Time:} We begin with an easy consequence of the results in~\cite{hreinsson2009storing}. The space occupied by our data structure is proportional to the sum of column entropies on $A$.

\begin{corollary} Given a matrix $A \in \mathbb{Z}^{N\times m}$, let $A_i$ be the multiset of values in column $i$ (i.e. $A_i = \{A_i\}_{j=1}^N$). Our representation occupies space $O\left(N\sum_i H_0(A_i)\right)$ and has $O(1)$ query time.
\end{corollary}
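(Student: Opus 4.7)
The plan is to reduce the corollary directly to Theorem~\ref{thm:csf_size} applied column-by-column, together with a constant-time bound on each CSF query.

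For the space bound I would first note that the CARAMEL data structure is a disjoint union of $m$ independent CSFs, one per column, plus optional per-column Bloom filters and a constant-size record of the dominating value $v_0$ for each filtered column. Applying Theorem~\ref{thm:csf_size} to the $i$-th column with a Huffman code $e$ gives space $\delta_s \sum_{v \in A_i} l_e(v) + O(1)$, and the standard Huffman bound $\sum_{v \in A_i} l_e(v) \le N(H_0(A_i) + 1)$ turns this into $O(N H_0(A_i))$. Summing over $i$ yields $O\bigl(N \sum_i H_0(A_i)\bigr)$, absorbing the $m$ constant overheads into the big-$O$ (they are dominated by the entropy term except in trivial degenerate columns, where $H_0(A_i)=0$ but the associated CSF is empty and contributes only $O(1)$). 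The optional Bloom filter case requires a separate check: when Algorithm~\ref{alg:construction} installs a filter of size $b = 1.44 |K_B| \log_2(1/\epsilon)$ bits, the CSF is only built over keys for which the filter returns ``True,'' so its contribution is bounded by $O(N H_0(A_i))$ as before, and the rigorous selection of $\epsilon$ and $\tau$ (from the later section) ensures the Bloom filter itself does not asymptotically dominate. If one is willing to cite the optimality result established in Section~\ref{sec:theory}, the filtered case only reduces space, so the unfiltered analysis already suffices to prove the upper bound.

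For query time I would argue that evaluating the representation at a key $k$ consists of at most $m$ independent per-column lookups, each of which does an $O(1)$ Bloom filter membership test (at most $O(\log(1/\epsilon))$ hash probes, which is constant for fixed $\delta$) followed either by returning the stored $v_0$ or by a single CSF evaluation. A CSF evaluation, as reviewed in Section~\ref{sec:background}, hashes $k$ with a constant number $s$ of universal hash functions, XORs the indexed bits of $\mathbf{g}$, and reads the instantaneous encoding; since instantaneous decoding terminates at the first valid codeword and the per-value code length is $O(1)$ in the word-RAM model, the whole evaluation is $O(1)$. Hence each column lookup is $O(1)$, which is the query-time claim.

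The only non-routine step is confirming that the Bloom filter branch does not inflate the asymptotic space, and that the $O(1)$ additive codebook and filter overheads incurred per column do not accumulate into an $m$-dependent term outside the big-$O$. I would handle this by observing that whenever Algorithm~\ref{alg:construction} activates the filter, $\alpha > 1/2$ forces $H_0(A_i) = \Omega(1)$ under the convention that at least one value differs from $v_0$, so $N H_0(A_i)$ already absorbs the $O(1)$ per-column overhead; the remaining columns (fully constant) contribute at most a single $O(1)$ term that can be folded into the leading asymptotic. No genuinely difficult inequality appears — the proof is essentially additivity of Theorem~\ref{thm:csf_size} across columns plus the standard constant-time analysis of a linear-system CSF evaluation.
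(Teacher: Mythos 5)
Your proposal is correct and is essentially the argument the paper intends: the paper states this corollary without an explicit proof as an ``easy consequence'' of Theorem~\ref{thm:csf_size}, and your column-by-column application of that theorem with the Huffman bound, plus the constant-time analysis of a single linear-system CSF evaluation, is exactly the implicit reasoning. The only caveat (which the paper's own statement shares, and which you partially flag) is that the up-to-one-bit-per-key Huffman redundancy makes each column cost $N H_0(A_i) + O(N)$ rather than strictly $O(N H_0(A_i))$ when a column's entropy is sub-constant, so the bound as written is tight only up to that additive $O(Nm)$ term.
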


\begin{algorithm}[t]
\caption{Permute Matrix}
\label{alg:permute_greedy}
\begin{algorithmic}
  \STATE {\bfseries Input:} Minimum block size $B$, $A\in \mathbb{Z}^{N\times m}$ without duplicates in each row
  \STATE {\bfseries Output:} Permuted $A \in \mathbb{Z}^{N \times m}$
  \STATE Row locations $T = $ dictionary from value $v$ to rows $(r_1, r_2, ... )$ of $A$ containing $v$
  \STATE Ineligible rows $E = $ dictionary from column $c$ to $(\,)$
  \WHILE {$\min_c |E(c)| < m$ (there are eligible rows remaining)}
  \STATE $V = $ values $(v_1, v_2, ... )$ in descending order of $|T(v)|$
  \STATE $P = (\,)$ of (value, destination column, destination weight)
  \FOR {$v \in V$}
    \STATE Best column $c' = \mathrm{argmax}_c |T(v)\setminus E(c)|$ with best weight $w' = |T(v)\setminus E(c')|$
    \STATE Add $(v, c', w')$ to $P$
  \ENDFOR
  \STATE $v^{\star}, c^{\star}, w^{\star} = $ entry of $P$ with the highest weight $w$
  \FOR {eligible row $r \in T(v^{\star})\setminus E(c^{\star})$}
    \STATE Find $c$ such that $A_{r,c} = v^{\star}$
    \STATE Swap $A_{r,c}$ and $A_{r,c^{\star}}$
    \STATE Remove $r$ from $T(v^{\star})$ and add $r$ to $E(c^{\star})$
  \ENDFOR
  \IF {$w^{\star} \leq B$}
  \STATE Early terminate and return $A$
  \ENDIF
  \ENDWHILE
\end{algorithmic}

\end{algorithm}

\subsection{Row Permutation to Minimize Column Entropy}

There are several applications for which the order of integers in each row of the matrix does not affect the validity of the representation. Such a situation can arise in product recommendation, where each row is a set of pre-computed product IDs, or in genomics, where each row is a set of minhash values. In these situations, we can further reduce the space by finding the order of integers that minimizes $H_0(V)$ for each column of the matrix. We wish to perform the following minimization:
 
\begin{definition} \label{def:permute_problem} Given a matrix $A \in \mathbb{Z}^{N\times m}$, let $\mathcal{A}(A)$ be the set of matrices such that for all $A' \in \mathcal{A}(A)$ and indices $(i,j)$, $A'_{(i,j)} = A_{(i, \pi_i(j))}$, where $\pi_i$ is a permutation function. For notational convenience, let $A_i$ be the multiset of values in column $i$ (i.e. $A_i = \{A_i\}_{j=1}^N$). The \emph{row permutation problem} is to perform the following minimization.
$$A^{\star} = \argmin_{A' \in \mathcal{A}(A)} \sum_{i = 1}^m H_0(A_i) = \underset{A' \in \mathcal{A}(A)}{\argmin} \sum_{i = 1}^m \sum_{v \in \mathrm{supp} A_i} \frac{1}{\#(v)} \log_2\frac{N}{\#(v)}$$
\end{definition}

We will refer to the problem given in Definition~\ref{def:permute_problem} as the \textit{Row Permutation Problem}. The (standard) notation $\pi_i(j)$ refers to a permutation on $\{m\}$, or a bijection from the set of numbers $\{1,...m\}$ onto itself. Note that the problem allows for different permutations of items in each row.

\textbf{Greedy Method to Minimize Entropy:} We propose a greedy heuristic method to optimize the column-wise entropy by iteratively swapping entries in the rows of $A$. The core of the method is to identify the value $v$ that can be relocated to group as many identical values into the same column as possible. This requires knowledge of two things: the rows where $v$ is present and eligible for relocation, and the possible destination columns where $v$ could be placed in each row. Once the optimal value and destination column have been identified, we swap the corresponding columns of $A$ and mark the value and column as ineligible for relocation in the future.

\begin{table*}[t]
  \centering
  \addtolength{\tabcolsep}{-4pt} 
  \begin{tabular}{ l|c|c|c|c|c|c|c} 
\toprule
Dataset &
($n$,$m$) &
$H_0$ (bits) &
Flat &
HT &
MPH \cite{indeedMPH} &
Succinct \cite{agarwal2015succinct} &
CARAMEL 
\\
\midrule
Synthetic (Uniform Distribution) & 100K, 1K & 9959 & 400 MB & 402 MB & 400.7 MB & 360 MB & 147 MB (2.7x) \\ 
Synthetic (Uniform Distribution)  & 10K, 1K & 9892 & 40 MB & 40.13 MB & 40.07 MB &  34.8 MB &  22.8 MB (1.8x)\\ 
Synthetic (Power Law) & 100K, 1K & 2343 & 400 MB & 402 MB & 400.7 MB &  207 MB &  38.24 MB (10.5x) \\ 
Synthetic (Power Law) & 10k, 1K & 2325 & 40 MB & 40.13 MB & 40.07 MB &  19.5 MB &  5.56 MB (7.2x)\\
\midrule
Featurized Genomes ~\cite{o2016reference} & 125k, 1k & 8192 & 1.0 GB & 1.03 GB & - &  736 MB & 324.29 MB (3.1x) \\ 
Featurized Proteomes ~\cite{o2016reference} & 125k, 1k & 8192 & 1.0 GB & 1.03 GB  & - &  736 MB &  324.29 MB (3.1x) \\ 
\midrule
Count Sketch (AOL terms~\cite{pass2006picture}) & $2^{18}$, 5 & 7.9 & 5.24 MB & 5.6 MB & 5.25 MB &  5.90 MB &  1.56 MB (3.9x) \\ 
Count Sketch (URL~\cite{kaggleURL,mamun2016detecting}) & 40k, 20 & 4.7 & 3.20 MB & 3.28 MB & 3.31 MB &  1.55 MB &  0.55 MB (5.9x) \\ 
\midrule
Amazon Polarity~\cite{mcauley2013hidden,zhang2015character} & 3.6M, 128 & 904.2 & 1.84 GB & 1.92 GB & 1.88 GB &  1.15 GB &  484 MB (3.8x) \\ 
MSMarco~\cite{nguyen2016ms} & 3.2M, 128 & 1432 &  1.65 GB  & 1.714 GB & 1.682 GB & 1.16 GB &  671 MB (2.6x) \\ 
MSMarco~\cite{nguyen2016ms} & 3.2M, 512 & 5195  & 6.58 GB & 6.64 GB & 6.62 GB &  - &  2.45 GB (2.6x) \\ 
MSMarco~\cite{nguyen2016ms} & 8.8M, 512 & 883.2 & 18.1 GB & 18.27 GB & 18.2 GB &  - & 1.12 GB (16x) \\ 
Pile (sample)~\cite{gao2020pile}& 7M, 128 & 265.2 & 34.8 GB & 34.9 GB  & - &  - &  2.4 GB (14.5x) \\ 
\midrule
ABC Headlines Embeddings~\cite{DVN/SYBGZL_2018} & 1.2M, 20 & 148.3 & 33.2 MB & 33.2 MB & 30.55 MB & 27.8 MB &  24.1 MB (1.4x) \\ 
Word2vec Embeddings~\cite{mikolov2013efficient} & 4M, 20 & 149.4 & 124.8 MB & 124.8 MB & 102.17 MB & 180 MB &  83.7 MB (1.5x) \\ 
SIFT Embeddings~\cite{aumuller2017ann} & 1M, 32 & 243.3 & 39.9 MB & 39.9 MB  & 36.9 MB &  67.5 MB &  34.1 MB (1.25x) \\ 
Yandex Embeddings~\cite{yandexT2I} & 1B, 20 & 159.3 & 27.7 GB & 27.7 GB  & 30.27 GB &  - &  22.2 GB (1.25x) \\ 
S-BERT Embeddings~\cite{reimers2019sentence} & 2.9M, 96 & 679.5 & 277.1 MB & 277.1 MB & 302.8 MB &  257 MB & 273.4 MB (1.01x)\\ 
\bottomrule
\end{tabular}
\caption{Compression rates for our method (CARAMEL), hash tables (HT), and the original dataset (Flat). Results are grouped by application. We report the space as the total structure size (in MB / GB) and the compression rate in parenthesis. $H_0$ is the sum of column entropies (in bits).}
\label{tab:csf_space}
\end{table*}

A basic version of the algorithm is listed in Algorithm~\ref{alg:permute_greedy}. In practice, we employ a variety of short-circuit evaluations and heuristics to reduce the time needed to identify the best value and column to relocate. We also allow for early termination when we can no longer relocate a block of at least $B$ values to the same column. . 

\section{Experiments}
\label{sec:experiments}

\textbf{Compression on Synthetic Datasets:} As a preliminary exercise to examine the potential for CSFs to compress integer matrices, we conducted two synthetic data experiments where we generated integer matrices first sampled from a discrete uniform distribution $\mathcal{U}(1, 1000)$ and then from a discrete truncated power law distribution $f(x) = c x^{-k}$ where we set $k=2$, set the support over $(1, 1000)$, and select the normalizing constant $c$ accordingly. This synthetic experiment closely follows that of \cite{8416577} with the notable exception that we examine compressing a two-dimensional matrix as opposed to a one-dimensional array. 

\textbf{Experiment Setup:} Preprocessing work was done in python and resulted in some transformed integer matrix from which to build our CARAMEL structure.  All results in Table~\ref{tab:csf_space} were obtained using using the publicly available Sux4j Library \footnote{http://sux4j.di.unimi.it/} in Java. Compression numbers are recorded excluding keys and we report additional baselines as the sum of the column-wise entropy of the matrix. Query latencies shown in Figure~\ref{fig:ablations} were obtained using timing modules built in C++. For our NLP experiments, we tokenize the text using the popular SentencePiece library \footnote{https://github.com/google/sentencepiece} \cite{kudo2018sentencepiece}. For our case study comparing against a production read-only database at Amazon, we build CSFs using our own Python implementation, which, unlike existing CSF libraries, supports generic hashable types such strings. 

For our remaining baseline methods, we conduct all of the evalations in Java. For Succinct, we use the standard Java release by the authors~\cite{agarwal2015succinct}. For MPH tables, we use a library that has been optimized to work in production by Indeed~\cite{indeedMPH}. It should be noted that Succinct can only support files up to 2 GB in size and that Indeed's MPH library can be slow to construct tables for large inputs. For this reason, we were unable to run the baselines for some of our data compression tasks.

\begin{table*}[t]
\begin{tabular}{l|c|c|c|c|c|c|c}
\hline
Dataset                    & Records & Size & Methods  & Indexing & Index & Query & C-Rate \\ \toprule
\multirow{2}{*}{Amazon Large} & \multirow{2}{*}{1.9M} & \multirow{2}{*}{4.1G} & CARAMEL       &   321.07s           &   1.77G    &     2.42ms & 2.31x \\ \cline{4-8} 
                          & & & Production DB & 76.34s                  & 2.5G       & 7.0ms & 1.64x  \\ \hline
\multirow{2}{*}{Amazon Small} & \multirow{2}{*}{1.0M} & \multirow{2}{*}{1.5G} & CARAMEL       &   149.62s         &    0.88G      &  2.10ms     &  1.7x   \\ \cline{4-8} 
                          & & & Production DB & 31.08s        & 0.95 G      &     6.8ms   & 1.58x      \\ \hline
\end{tabular}
\caption{Comparing between CARAMEL with Amazon Standard look-up table DB. We report statistics on the data size, the indexing time, indexing size, query time, and compression rate (C-Rate) of two methods.}
\label{tab:amazon}
\end{table*}

\textbf{Product-Quantized Embedding Tables:}
It is fairly common in enterprise systems to approximate large embedding tables with quantization \cite{jegou2010product,guo2020accelerating,johnson2019billion,kang2020learning,ko2021low}. 
We apply CARAMEL to quantized embedding tables and report our compression numbers compared to the space of these tables with keys included. Additionally, all embedding tables are quantized to ensure a reasonable metric distortion of >80\% KNN-recall@100 \cite{jegou2010product,kang2020learning}.


\textbf{Construction Time:} Construction times are reasonable, especially in the context of machine learning systems that frequently don't update for upwards of a day or more. For 10M keys we construct a single CSF in just over 1 minute, for 100M keys: 10+ minutes, and for 1B keys: less than 2 hours. For billion scale data this construction time is more than reasonable, especially considering CARAMEL and CSF constructions are trivially parallelizable. 

\textbf{Query Time:} We compare query times between our CARAMEL structure and a standard hash table implemented in C++. We look at both median and 99\textsuperscript{th} percentile query time (P99) as P99 is the defining factor for latency in web-scale production systems. We also investigate the query time scaling as $N$ increases to demonstrate the the worst case complexity of hash-tables against the $O(1)$ lookups into CSFs. Timing experiments were done on a machine equipped with Intel(R) Xeon(R) Gold 5220R CPU @ 2.20GHz over 20,000 queries. 

\subsection{Discussion}
\label{sec:discussion}

We observe that CARAMEL achieves strong compression rates ranging from 1.25-16x across a variety of applications, with better rates for large, low-entropy data. We also see that the compression rate tends to improve as we increase the number of rows $N$. From our synthetic dataset benchmarks, we find that our method is particularly effective in representing power law data, which is commonplace in web-scale settings. Finally, and perhaps surprisingly, we observe that CARAMEL achieves a faster lookup latency than an uncompressed hash table, possibly due to improved cache locality. This result suggests that CARAMEL could be a strictly more effective primitive in latency and memory-critical settings involving integer data than traditional key-value stores that do not exploit the underlying entropy.

\subsection{Case Study on Amazon Product Search}
Amazon's product search engine stores highly frequent queries and their corresponding frequently purchased or clicked products for the purposes of caching results and producing search ranking features \cite{yang2022can}. During the online ranking model prediction phase, the search engine accesses these features from a read-only lookup table. Due to very tight latency requirements for this lookup process, these database queries typically must be executed in less than 10 milliseconds \cite{luo2022rose}. 

Amazon uses an in-house read-only database to host this lookup request. This read-only database is a read-only file based map store similar to Berkeley DBs~\cite{olson1999berkeley}. We refer to this lookup table as Amazon DB. Amazon DB is an industry standard solution for read-only key-value lookups used, amongst other applications, for serving precomputed search results produced by machine learning models. We compare CARAMEL with Amazon DB for indexing query-product ranking features. For these experiments, we use the Python CSF library we developed that, unlike existing tools, is capable of operating on generic hashable types such as strings. 

We sampled two datasets of query-product pairs derived from Amazon.com search logs. This data set contains the query and the associated products that were clicked in one day. Table \ref{tab:amazon} shows the statistics of the datasets we evaluate in this case study.

From Table \ref{tab:amazon}, we observe CARAMEL has more efficient memory usage than the Amazon DB. On the larger benchmark Amazon dataset, CARAMEL achieves a 2.31x compression rate compared to the 1.64x mark of Amazon DB with a considerably faster query time. We observe a similar trend on the smaller dataset. We also observe that the space and latency improvements of CARAMEL come at the cost of longer index construction times. However, we note that CARAMEL's construction costs are still on the order of minutes for these datasets which is reasonable given the one-time cost of constructing the index. Moreover, we believe that CARAMEL can achieve faster construction times by better utilizing the multiprocessing capabilities available in Python, an optimization we leave for future work. 





\section{Conclusion}
We introduce CARAMEL, a new data structure for read-only key-value lookups based on the key insight of combining multiple compressed static functions (CSFs) together. We demonstrate in this paper that CSFs are practically useful data structures that provide a powerful primitive for improving the scalability of numerous data-intensive applications. We also show that CARAMEL performs favorably when compared to existing algorithms and systems proposed in both the academic literature and in active production use at a leading online technology company, which underscores the relevance of our work for industry practitioners. We hope that our work will bring the community's attention to both the prevalence of low-entropy key-value mappings across data-intensive applications and the potential for representations like CARAMEL to aid in the important effort of scaling data infrastructure in the future.

\bibliographystyle{plain}
\bibliography{main}


\end{document}